\title{\LARGE \bf
A Highway Toll Lane Framework that Unites Autonomous Vehicles and High-occupancy Vehicles}
\newtheorem{theorem}{Theorem}
\newtheorem{proposition}{Proposition}
\theoremstyle{definition}
\newtheorem{example}{Example}
\newtheorem{definition}{Definition}
\theoremstyle{remark}
\definecolor{darkblue}{RGB}{0,101,204}
\definecolor{carorange}{RGB}{255,131,0}
\newcounter{tmp}
\author{Ruolin Li$^{1}$,   Philip N. Brown$^{2}$ and Roberto Horowitz$^{1}$
\thanks{$^{1}${R. Li and R. Horowitz are with the Department of Mechanical Engineering, University of California, Berkeley, CA, USA.
	{\tt\small ruolin\_li@berkeley.edu},
{\tt\small horowitz@me.berkeley.edu.}}
}
\thanks{$^{2}$P. N. Brown is with the Department of Computer Science, University of Colorado Colorado Springs, USA.
       {\tt\small philip.brown@uccs.edu.}}
}
\begin{document}

\maketitle


\thispagestyle{empty}
\pagestyle{empty}

\begin{abstract}
We consider the scenario where human-driven/autonomous vehicles with low/high occupancy are sharing a segment of highway and autonomous vehicles are capable of increasing the traffic throughput by preserving a shorter headway than human-driven vehicles. We propose a toll lane framework where a lane on the highway is reserved freely for autonomous vehicles with high occupancy, which have the greatest capability to increase the 
social mobility, and the other three classes of vehicles can choose to use the toll lane with a toll or use the other regular lanes freely. All vehicles are assumed to be only interested in minimizing their own travel costs. We explore the resulting lane choice equilibria under the framework and establish desirable properties of the equilibria, which implicitly compare high-occupancy vehicles with autonomous vehicles in terms of their capabilities to increase the social mobility. We further use numerical examples in the optimal toll design, the occupancy threshold design and the policy design problems to clarify the various potential applications of this toll lane framework that unites high-occupancy vehicles and autonomous vehicles. To our best knowledge, this is the first work that systematically studies a toll lane framework that unites autonomous vehicles and high-occupancy vehicles on the roads.

\end{abstract}

\section{Introduction}\label{intro}
With the advancement of autonomous driving technologies, researchers and policy designers have been extensively investigating the potential application of autonomous vehicles in intelligent transportation systems. Compared to human-driven vehicles, autonomous vehicles can be more reliable by mitigating human operation errors~\cite{farmer2008crash,bagloee2016autonomous} and more advantageous in sustainable development by optimizing fuel consumption~\cite{asadi2010predictive,luo2010model}. Previous literature has also shown that autonomous vehicles are capable of increasing lane capacities by forming platoons and preserving a shorter headway compared to human-driven vehicles, and therefore, increase the traffic throughput~\cite{zohdy2012intersection,talebpour2016influence,lioris2017platoons}.

However, some advantages of connected and autonomous vehicles rely heavily on the organization of autonomous vehicles on the roads. For example, gathering autonomous vehicles on the roads together will facilitate the platooning of autonomous vehicles and also be safer due to the lack of disturbances from human-driven vehicles. Therefore, lane policies for autonomous vehicles are of significant importance and can be decisive on the efficiency of employing autonomous vehicles. Currently, there are two major categories of lane policies for autonomous vehicles. The first category is the integrated lane policy~\cite{mehr2019will}. The integrated lane policy indicates that autonomous vehicles travel along with human-driven vehicles on the same group of lanes. Such policies are convenient but may compromise the platooning ability and safety of autonomous vehicles. The second category of policies are dedicated lane policies~\cite{mahmassani201650th,ye2018impact,ivanchev2019macroscopic}. Under such policies, some lanes are reserved exclusively for autonomous vehicles. Such policies are preferred considering the safety and the easy organization of autonomous vehicles. However, when the penetration rate of autonomous vehicles is low, the employment of dedicated lanes shows adverse effects and compromises the social mobility~\cite{doi:10.3141/2622-01,zhong2018assessing}. Further, in ~\cite{liu2019strategic}, autonomous vehicle toll lanes are studied, which admit autonomous vehicles to travel freely but also allow human-driven vehicles to enter paying a toll. This way, when the penetration rate of autonomous vehicles is low, human-driven vehicles can effectively use the toll lane and relieve congestion on regular lanes.

Even when autonomous vehicles are prevalent and dedicated lanes are necessary in terms of the safety and advantageous mobility, the implementation or construction of brand new dedicated lanes can be costly and time-consuming. Therefore, researchers recently consider converting other existing dedicated lanes such as high-occupancy vehicle lanes to dedicated lanes for autonomous vehicles. For example, in~\cite{xiao2019traffic,guo2020leveraging}, simulations and experiments are conducted to investigate the benefit of converting an existing high-occupancy vehicle lane to a dedicated lane for autonomous vehicles.

In this work, we consider the scenario where four classes of vehicles are sharing a segment of highway: human-driven vehicles with low occupancy, human-driven vehicles with high occupancy, autonomous vehicles with low occupancy and autonomous vehicles with high occupancy. Autonomous vehicles are capable of increasing the traffic throughput by preserving a shorter headway than human-driven vehicles. High-occupancy vehicles carry multiple commuters per vehicle and low-occupancy vehicles carry a single commuter per vehicle. We propose a toll lane framework, where on the highway, a toll lane is reserved freely for autonomous vehicles with high occupancy and the other three classes of vehicles can choose to enter the toll lane paying a toll or use the other regular lanes freely. We consider all vehicles are selfish and only interested in minimizing their own travel costs. We then explore the resulting lane choice equilibria and establish properties of the equilibria, which implicitly compare high-occupancy vehicles with autonomous vehicles in terms of their capabilities to increase the social mobility. We further use numerical examples to clarify the various potential applications of this toll lane framework that unites high-occupancy vehicles and autonomous vehicles in the optimal toll design, the optimal occupancy threshold design and the policy design problems. To our best knowledge, this is the first work that systematically studies a toll lane framework that unites autonomous vehicles and high-occupancy vehicles on the roads.

This paper is organized as follows. In Section~\ref{sec:model}, we give detailed descriptions of the toll lane framework and vehicles' lane choice model. In Section~\ref{sec:NE}, we establish the properties of the resulting lane choice equilibria. In Section~\ref{sec:toll design}, we clarify how the toll lane framework can be used to find the optimal toll that minimizes the total commuter delay. In Section~\ref{sec:n design}, we clarify how the toll lane framework can be used to design the occupancy threshold. In Section~\ref{sec:policy design}, we clarify how the toll lane framework can be used to design the lane policy. In Section~\ref{sec:diff_toll}, we propose an efficient method to decrease the total commuter delay by differentiating the tolls. Finally, in Section~\ref{sec:future}, we draw conclusions of this work.

\section{The Model}\label{sec:model}

\begin{figure}
\centering
 \includegraphics[width=0.48\textwidth]{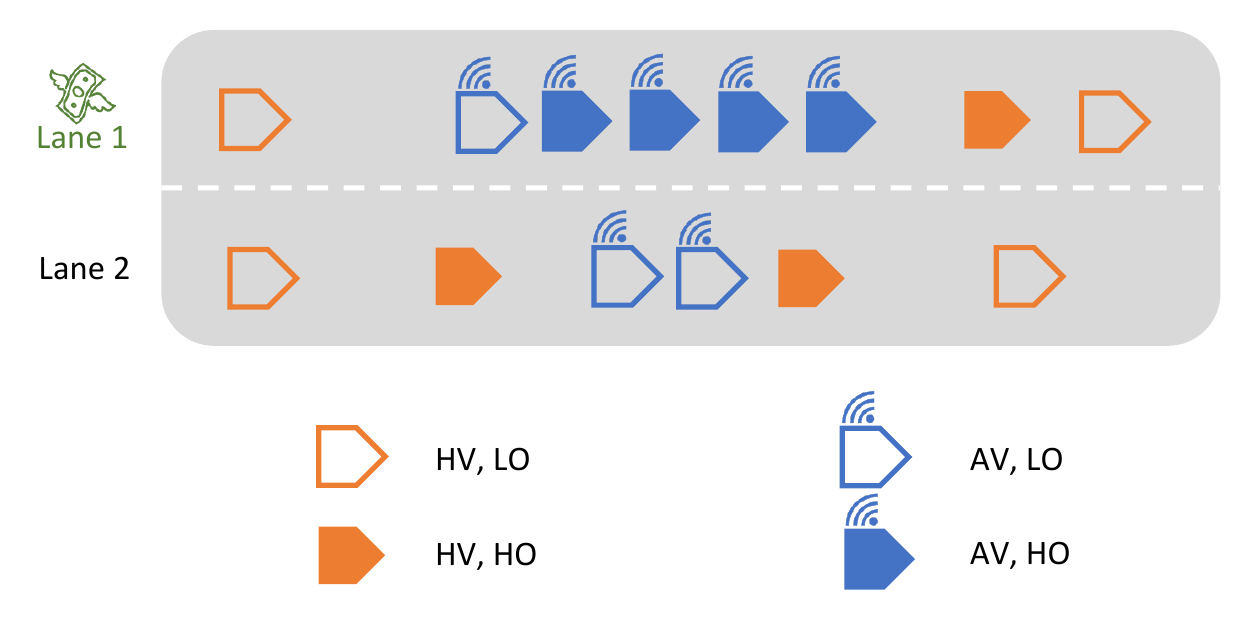}
\caption{Problem setting: all autonomous vehicles with high occupancy travel freely on lane 1, whereas the other three classes of vehicles either pay a toll traveling on lane 1 or travel on lane 2 freely.}
\label{fig:diverge_basic}
\end{figure}

Let $I=\{1,2\}$ be the lane index set for a segment of highway shown in Figure~\ref{fig:diverge_basic},
where lane 1 is the reserved toll lane, and lane 2
is a regular lane that can be used by any class of vehicles freely. We consider that four classes of vehicles are sharing the roads: human-driven vehicles with low occupancy (HV,LO), human-driven vehicles with high occupancy (HV,HO), autonomous vehicles with low occupancy (AV,LO) and autonomous vehicles with high occupancy (AV,HO). We assume high-occupancy vehicles have $n\geq 2$ commuters per vehicle and low-occupancy vehicles have only one commuter per vehicle. Throughout this work, we assume the commuter demands are inelastic. Let $d^{\rm HV,LO}$ be the fixed demand of commuters who individually drive a human-driven vehicle and let $d^{\rm HV,HO}$ be the fixed demand of commuters who carpool in a human-driven vehicle. Similarly, we have $d^{\rm AV,LO}$ as the fixed demand of commuters who individually use an autonomous vehicle and let $d^{\rm AV,HO}$ be the fixed demand of commuters who carpool in an autonomous vehicle. We collect the commuter demands in the vector $\mathbf{d}:= \left(d^{\rm HV,LO},\ d^{\rm HV,HO},\ d^{\rm AV,LO},\ d^{\rm AV,HO}\right)$. Therefore, on the segment of highway, we have $d^{\rm HV,LO}$ human-driven vehicles with low occupancy, $\frac{d^{\rm HV,HO}}{n}$ human-driven vehicles with high occupancy, $d^{\rm AV,LO}$ autonomous vehicles with low occupancy, and $\frac{d^{\rm AV,HO}}{n}$ autonomous vehicles with high occupancy. Also in this framework, we assume autonomous vehicles can preserve a shorter headway than human-driven vehicles and therefore, increase the lane capacities. We employ the concept introduced and studied in~\cite{mehr2019will,li2020impact}, the capacity symmetry degree $\mu\in (0,1)$, to indicate autonomous vehicles' lane capacity-increasing ability. When $\mu$ decreases, autonomous vehicles' lane capacity-increasing ability increases. When $\mu$ approaches 1, autonomous vehicles have almost the same headway as human-driven vehicles and barely increase lane capacity. In this work, autonomous vehicles share a uniform and fixed capacity asymmetry degree $\mu$.

We reserve lane 1 for the autonomous vehicles with high occupancy due to their best capability to increase the social mobility among the four classes of vehicles by carrying multiple commuters per vehicle and preserving a shorter headway than human-driven vehicles. The other three classes of vehicles can then choose to either pay a toll and enter lane 1 or travel freely on lane 2. For $i\in I$, let $f_i^{\rm HV,LO}$ be the flow of human-driven vehicles with low occupancy on lane $i$, $f_i^{\rm HV,HO}$ be the flow of human-driven vehicles with high occupancy on lane $i$, and $f_i^{\rm AV,LO}$ be the flow of autonomous vehicles with low occupancy on lane $i$. We then have the flow distribution vector $\mathbf{f}:= \left(f_i^{\rm HV,LO},\ f_i^{\rm HV,HO},\ f_i^{\rm AV,LO}:\ i\in I\right)\in \mathbb{R}^{6}$. A feasible and meaningful flow distribution vector $\mathbf{f}$ satisfies:
\begin{align}
    &\sum_{i\in I}f_i^{\rm HV,LO} =d^{\rm HV,LO},\label{eq:1}\\
    &\sum_{i\in I}f_i^{\rm HV,HO} =\frac{d^{\rm HV,HO}}{n},\label{eq:2}\\
    &\sum_{i\in I}f_i^{\rm AV,LO} =d^{\rm AV,LO},\label{eq:3}\\
    f_i^{\rm HV,LO}\geq 0,\ &f_i^{\rm HV,HO}\geq 0,\ f_i^{\rm AV,LO}\geq 0,\ \forall i\in I.\label{eq:4}
\end{align}
For simplicity, we may use $\mathbf{f}:= \left(f_1^{\rm HV,LO},\ f_1^{\rm HV,HO},\ f_1^{\rm AV,LO}\right)\in \mathbb{R}^3$ for future reference with constraints~\eqref{eq:1} --~\eqref{eq:4} implied.

We naturally assume vehicles are selfish and the three classes of vehicles make their lane choices to minimize their own travel cost. In this framework, we use the type of volume-capacity delay models (such as BPR functions~\cite{roads1964traffic}), in which the travel delay is a continuous and increasing function of the flow-capacity ratio. For lane $i\in I$, let $D_i$ be the travel delay on the lane. To incorporate the impact of autonomous vehicles, we refer to the results from~\cite{lazar2017capacity} and~\cite{mehr2019will}, and for lane $i\in I$, we let $f_i$ be the effective flow on lane $i$, which indicates the impact of the mixed autonomy flow on the lane delay. We have
\begin{align}
    f_1 &:=f_1^{\rm HV,LO}+f_1^{\rm HV,HO}+\mu\left(f_1^{\rm AV,LO}+\frac{d^{\rm AV,HO}}{n}\right),\label{eq:f1}\\
f_2 &:=f_2^{\rm HV,LO}+f_2^{\rm HV,HO}+\mu f_2^{\rm AV,LO}.
\end{align}
Therefore, for lane $i\in I$, we have that travel delay $D_i$ is a continuous and increasing function of $f_i$. Let $f_1^{\rm min}$ be the theoretical minimum of $f_1$, and let $f_1^{\rm max}$ be the theoretical maximum of $f_1$. Notice that for fixed commuter demands, $f_1^{\rm min}$ and $f_1^{\rm max}$ are constants. We have
\begin{align}
f_1^{\rm min}&=\frac{\mu d^{\rm AV,HO}}{n},\\
f_1^{\rm max}&=d^{\rm HV,LO}+\frac{d^{\rm HV,HO}}{n}+\mu\left(d^{\rm AV,LO}+\frac{d^{\rm AV,HO}}{n}\right).
\end{align}
We assume that a uniform toll price $\tau\geq 0$ is designed for the three classes of vehicles. For lane 1, the travel cost equals the sum of the travel delay and the toll, whereas for lane 2, the travel cost is exactly the travel delay. Let $C_i$ be the travel cost for lane $i\in I$, and we have
\begin{align}
    C_1(\mathbf{f}) &=D_1(f_1)+\tau,\label{eq:tc1}\\
    C_2(\mathbf{f}) &=D_2(f_2).\label{eq:tc11}
\end{align}
Let the tuple $G=\left(\mathbf{D},\mathbf{d},\tau,n,\mu\right)$ represent a segment of highway shown in Figure~\ref{fig:diverge_basic} with the delay models $\mathbf{D}$, commuter demands $\mathbf{d}$, a toll price $\tau$, an occupancy threshold $n$ for high-occupancy vehicles and a capacity asymmetry degree $\mu$ for autonomous vehicles. The selfish lane choice equilibrium of the three classes of vehicles can then be modeled as a Wardrop equilibrium~\cite{wardrop1952some} as below.

\begin{definition}\label{def:wdp_basic}
For a segment of highway $G=\left(\mathbf{D},\mathbf{d},\tau,n,\mu\right)$, a feasible flow distribution vector $\mathbf{f}$ is a lane choice equilibrium if and only if
\begin{subequations}\label{eq:eq_def}
    \begin{align}
    f_1^{\rm HV,LO} (C_1(\mathbf{f}) - C_2(\mathbf{f})) &\leq 0 ,\\
    f_2^{\rm HV,LO} (C_2(\mathbf{f}) - C_1(\mathbf{f}))&\leq 0,\\
    f_1^{\rm HV,HO} (C_1(\mathbf{f}) - C_2(\mathbf{f})) &\leq 0 ,\\
    f_2^{\rm HV,HO} (C_2(\mathbf{f}) - C_1(\mathbf{f}))&\leq 0,\\
    f_1^{\rm AV,LO} (C_1(\mathbf{f}) - C_2(\mathbf{f})) &\leq 0 ,\\
    f_2^{\rm AV,LO} (C_2(\mathbf{f}) - C_1(\mathbf{f}))&\leq 0.
    \end{align}
\end{subequations}
\end{definition}
\noindent The definition guarantees that at the choice equilibrium, if the travel cost of lane 1 is higher than the travel cost of lane 2, then all of the three classes of vehicles would travel on lane 2; if the travel cost of lane 1 is lower than the travel cost of lane 2, then all of the three classes of vehicles would choose to pay the toll and travel on lane 1; if the travel cost of lane 1 is equal to the travel cost of lane 2, then any vehicle of the three classes of vehicles could travel either on lane 1 or on lane 2. Moreover, if any of the three classes of vehicles are on lane 1, then the travel cost of lane 1 cannot be higher than the cost of lane 2; if any of the three classes of vehicles are on lane 2, then the travel cost of lane 2 cannot be higher than the cost of lane 1; if the any class of vehicles use both lane 1 and lane 2, then the travel cost of lane 1 and lane 2 must be equal.

The metric we use in this framework to evaluate the social mobility is the total delay of all commuters. The total delay of all commuters at an lane choice equilibrium $\mathbf{f}$ can be calculated as
\begin{align}\label{eq:social_cost}
   &J(\mathbf{f})=\nonumber\\
   &\left[ f_1^{\rm HV,LO}+f_1^{\rm AV,LO}+n\left(f_1^{\rm HV,HO}+\frac{d^{\rm AV,HO}}{n} \right)\right] D_1(f_1) \nonumber\\
   &+ \left(f_2^{\rm HV,LO}+f_2^{\rm AV,LO}+nf_2^{\rm HV,HO}\right)D_2(f_2).
\end{align}

\section{Equilibrium properties}\label{sec:NE}

In this section, we establish crucial properties of the resulting lane choice equilibria under the framework as described in Definition~\ref{def:wdp_basic}. According to the core theorem in~\cite{braess1979existence}, we give the following proposition without proof.
\begin{proposition}\label{thm:existence}
For a segment of highway $G=\left(\mathbf{D},\mathbf{d},\tau,n,\mu\right)$, there always exists at least one lane choice equilibrium as described in Definition~\ref{def:wdp_basic}.
\end{proposition}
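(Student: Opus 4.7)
The plan is to recast the equilibrium condition of Definition~\ref{def:wdp_basic} as a finite-dimensional variational inequality over the set of feasible flow vectors, and then obtain a solution via Brouwer's fixed-point theorem applied to the Euclidean projection map (which is, in spirit, the argument used by Braess~\cite{braess1979existence}). Concretely, treat the three lane-choosing classes (HV,LO), (HV,HO), and (AV,LO) as three commodities with demands $d^{\rm HV,LO}$, $d^{\rm HV,HO}/n$, and $d^{\rm AV,LO}$ respectively, each routed over the two parallel ``paths'' corresponding to lane~1 and lane~2. Writing $k$ for the commodity index and $\mathbf{f}^*$ for a candidate equilibrium, the per-class arc-form of~\eqref{eq:eq_def} is equivalent to
\begin{equation*}
    \sum_{k}\sum_{i\in I} C_i(\mathbf{f}^*)\bigl(g_i^{k} - f_i^{k,*}\bigr) \geq 0 \quad \text{for every feasible } \mathbf{g},
\end{equation*}
with $C_i$ given by~\eqref{eq:tc1}--\eqref{eq:tc11}. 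This is a standard variational inequality on the feasible set $\mathcal{F}\subset\mathbb{R}^6$ defined by~\eqref{eq:1}--\eqref{eq:4}.

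The next step is to verify the hypotheses needed for the existence theorem. The set $\mathcal{F}$ is the Cartesian product of three one-dimensional simplices (one per commodity, each a closed line segment determined by the respective demand), so it is nonempty, compact, and convex. The effective flows $f_1$ and $f_2$ in~\eqref{eq:f1} and the line below are affine functions of $\mathbf{f}$, and by assumption $D_1$ and $D_2$ are continuous; adding the fixed toll $\tau$ preserves continuity, so $\mathbf{f}\mapsto\bigl(C_1(\mathbf{f}),C_2(\mathbf{f})\bigr)$ is continuous on $\mathcal{F}$. The constant offset $\mu d^{\rm AV,HO}/n$ inside $f_1$, coming from the reserved AV,HO flow, does not disrupt continuity or compactness.

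Given these ingredients, the map $\mathbf{f}\mapsto \Pi_{\mathcal{F}}\bigl(\mathbf{f}-C(\mathbf{f})\bigr)$, where $\Pi_{\mathcal{F}}$ is Euclidean projection onto $\mathcal{F}$ and $C(\mathbf{f})$ is the cost vector with $C_i$ placed in each coordinate associated with lane $i$, is a continuous self-map of a nonempty compact convex set; Brouwer's theorem yields a fixed point, which by the standard projection characterization is a solution of the variational inequality, hence an equilibrium. The one subtlety I would flag explicitly is that the coefficient $\mu$ mixes AV and HV contributions inside the single argument $f_1$ of $D_1$, so the problem does \emph{not} admit a clean Beckmann-type convex potential whose minimizer is automatically a Wardrop equilibrium in the sense of Definition~\ref{def:wdp_basic}; this is precisely why the VI/fixed-point route is preferable to a direct convex-optimization route, and is the only place where one must be mildly careful in adapting the classical single-occupancy-class argument to the present multi-class, mixed-autonomy setting.
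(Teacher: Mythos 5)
Your main argument is sound: because the toll is uniform, the costs $C_1,C_2$ are class-independent, the feasible set factors into one segment per lane-choosing class, and the complementarity conditions~\eqref{eq:eq_def} are exactly the per-class optimality conditions of the variational inequality you write down; continuity of $D_1,D_2$ plus compactness and convexity of the feasible polytope then give existence via the standard projection/Brouwer (Hartman--Stampacchia) argument. The paper itself gives no proof at all --- Proposition~\ref{thm:existence} is stated by appeal to the core theorem of~\cite{braess1979existence}, which is precisely the kind of multiclass fixed-point existence result you reconstruct --- so your write-up is a self-contained instantiation of the cited argument rather than a different route. One correction, however: the ``subtlety'' you flag at the end is not actually an obstruction in this model. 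Since every class on lane $i$ experiences the same delay $D_i(f_i)$ with $f_i$ affine in $\mathbf{f}$, and the toll is common, the convex Beckmann-type potential
\begin{equation*}
\Phi(\mathbf{f})=\int_0^{f_1}D_1(s)\,ds+\int_0^{f_2}D_2(s)\,ds+\tau\left(f_1^{\rm HV,LO}+f_1^{\rm HV,HO}+\mu f_1^{\rm AV,LO}\right)
\end{equation*}
does the job: its partial derivatives with respect to the two HV coordinates equal $C_1(\mathbf{f})-C_2(\mathbf{f})$ and with respect to $f_1^{\rm AV,LO}$ equal $\mu\left(C_1(\mathbf{f})-C_2(\mathbf{f})\right)$, and the positive factor $\mu$ does not alter the sign conditions, so the first-order conditions of minimizing $\Phi$ over the (linearly constrained, compact) feasible set are exactly~\eqref{eq:eq_def}; existence then follows from Weierstrass alone, which is even more elementary than the fixed-point route. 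So your proof stands as written, but the claim that no convex-potential argument is available should be dropped (it would only become true if, e.g., tolls or delays entered the classes' costs in a genuinely asymmetric, non-proportional way).
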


The next theorem claims that the resulting lane choice equilibrium is generally only unique if at the equilibrium, all of the three classes of vehicles travel on the same lane.
\begin{theorem}\label{thm:uniqueness}
For a segment of highway $G=\left(\mathbf{D},\mathbf{d},\tau,n,\mu\right)$, the lane choice equilibrium as described in Definition~\ref{def:wdp_basic} is unique if and only if any of the following conditions holds:
\begin{itemize}
    \item $\tau \geq D_2\left(f_1^{\rm max}-f_1^{\rm min}\right)-D_1\left(f_1^{\rm min}\right),$
    \item $\tau \leq D_2\left(0\right)-D_1\left(f_1^{\rm max}\right).$
\end{itemize}
\end{theorem}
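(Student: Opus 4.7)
My plan is to reduce the problem to a one-dimensional analysis via the conservation identity $f_1 + f_2 = f_1^{\max}$, which follows from summing \eqref{eq:1}--\eqref{eq:3} inside the definitions of the effective flows. This makes both lane costs $C_1, C_2$ functions only of the single scalar $f_1 \in [f_1^{\min}, f_1^{\max}]$. Since $D_1$ and $D_2$ are continuous and strictly increasing, the gap function $g(f_1) := D_1(f_1) + \tau - D_2(f_1^{\max} - f_1)$ is continuous and strictly increasing, and its sign is exactly the sign of $C_1 - C_2$. Crucially, the two bullet conditions in the theorem are precisely $g(f_1^{\min}) \geq 0$ and $g(f_1^{\max}) \leq 0$.

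For the sufficiency direction, suppose the first bullet holds, i.e.\ $g(f_1^{\min}) \geq 0$. By monotonicity $g(f_1) \geq 0$ for all feasible $f_1$, with strict inequality whenever $f_1 > f_1^{\min}$. The Wardrop conditions~\eqref{eq:eq_def} then force $f_1^{\rm HV,LO} = f_1^{\rm HV,HO} = f_1^{\rm AV,LO} = 0$ (since no positive voluntary flow can persist on lane 1 when $C_1 > C_2$), which pins down a unique equilibrium with $f_1 = f_1^{\min}$. If the second bullet holds, the symmetric argument gives $C_1 \leq C_2$ with strict inequality whenever $f_1 < f_1^{\max}$, forcing all voluntary flow onto lane 1 and again uniquely determining $\mathbf{f}$.

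For necessity I proceed by contrapositive. Assume both bullets fail, so $g(f_1^{\min}) < 0 < g(f_1^{\max})$. By the intermediate value theorem and strict monotonicity there is a unique $f_1^{\star} \in (f_1^{\min}, f_1^{\max})$ with $g(f_1^{\star}) = 0$, so $C_1 = C_2$ at this effective flow. Any feasible $\mathbf{f}$ whose lane-1 effective flow equals $f_1^{\star}$ then satisfies all the inequalities in~\eqref{eq:eq_def} with both sides zero, hence is an equilibrium. Since $f_1^{\star} > f_1^{\min}$, the affine constraint $f_1^{\rm HV,LO} + f_1^{\rm HV,HO} + \mu f_1^{\rm AV,LO} = f_1^{\star} - f_1^{\min}$ admits a whole slice of solutions inside the demand box~\eqref{eq:4}; I would exhibit two distinct equilibria by the simplest possible perturbation, e.g.\ shifting a small $\epsilon > 0$ of flow between the HV,LO and HV,HO classes across the two lanes, which preserves $f_1^{\star}$ because both classes carry coefficient one in the definition of $f_1$.

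The main obstacle is the final step: the two-equilibrium construction requires the affine slice to contain more than one feasible point, which could fail in degenerate cases where only one voluntary commuter demand is positive. I would therefore make explicit a non-degeneracy hypothesis on $\mathbf{d}$ (the paper's narrative treats all four vehicle classes as present), under which the HV,LO--HV,HO swap above is always available and yields a genuinely distinct second equilibrium, completing the contrapositive.
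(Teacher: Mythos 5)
Your proof follows essentially the same route as the paper's: reduce to one dimension via $f_1+f_2=f_1^{\rm max}$, treat $C_1$ as increasing and $C_2$ as decreasing in $f_1$, identify the two bullet conditions with the corner-flow cases where one lane dominates everywhere, and show that an interior intersection $f_1^*$ yields a whole simplex of equilibria satisfying $f_1^{\rm HV,LO}+f_1^{\rm HV,HO}+\mu f_1^{\rm AV,LO}=f_1^*-f_1^{\rm min}$. The only substantive difference is that you make explicit the boundary-value reading of the bullets and the non-degeneracy of $\mathbf{d}$ needed for the ``only if'' direction (with a single tolled class of positive demand the intersection pins down a unique equilibrium), a caveat the paper's case~(b) glosses over by simply asserting non-uniqueness of the simplex $\mathcal{S}$.
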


\begin{proof}

\begin{figure*}
\centering
     \begin{subfigure}[b]{0.32\textwidth}
         \centering
         \includegraphics[width=\textwidth,height=140pt]{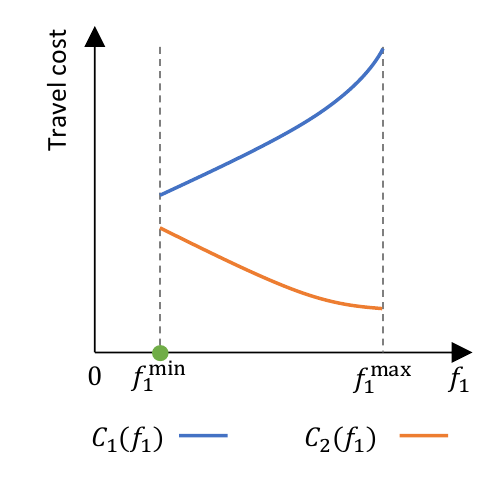}
         \caption{$C_1(f_1)\geq C_2(f_1)$, $\forall f_1\in [f_1^{\rm min},f_1^{\rm max}]$.}
         \label{fig:case a}
     \end{subfigure}
     \hfill
    \begin{subfigure}[b]{0.32\textwidth}
         \centering
         \includegraphics[width=\textwidth,,height=150pt]{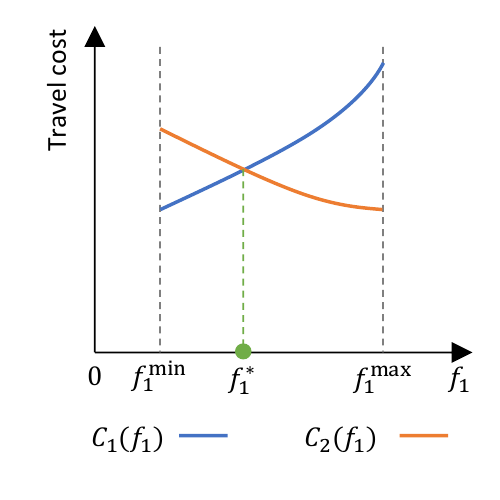}
         \caption{$C_1(f_1)$ and $C_2(f_1)$ intersect at $f_1^*$.}
         \label{fig:case b}
     \end{subfigure}
     \hfill
    \begin{subfigure}[b]{0.32\textwidth}
         \centering
         \includegraphics[width=\textwidth,,height=150pt]{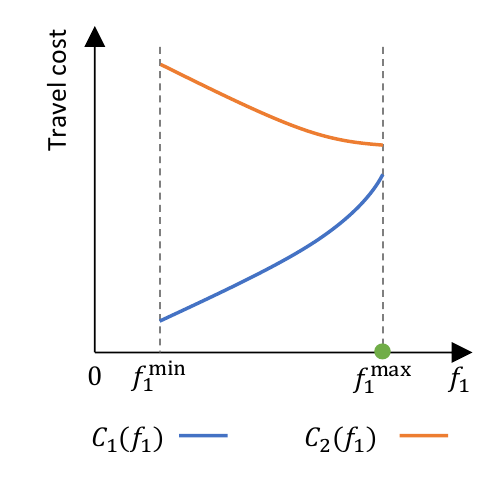}
         \caption{$C_1(f_1)\leq C_2(f_1)$, $\forall f_1\in [f_1^{\rm min},f_1^{\rm max}]$.}
         \label{fig:case c}
     \end{subfigure}
    \caption{Possible sketches of the travel cost on both lanes. Resulting lane choice equilibria are indicated by the green dots. Non-unique equilibria only exist in case (b).}\label{fig:JJ}
\end{figure*}

The travel cost on lane 1, $C_1(\mathbf{f})$ is a continuous, increasing function of $f_1$, thus can be written as $C_1(f_1)$. Also notice that, we always have
\begin{align}
    f_1 + f_2 =f_1^{\rm max}.
\end{align}
Thus with slight abuse of notation,  we can treat the travel cost on lane 2, $C_2(\mathbf{f})$ as a continuous, decreasing function of $f_1$, written as $C_2(f_1)$. 
Three possible sketches of $C_1(f_1)$ and $C_2(f_1)$ for $f_1\in [f_1^{\rm min},f_1^{\rm max}]$ are shown in Figure~\ref{fig:JJ}. In case (a), $C_1(f_1)\geq C_2(f_1)$ for any possible $f_1\in [f_1^{\rm min},f_1^{\rm max}]$, thus all of the three classes of vehicles would use lane 2, and the lane choice equilibrium is unique at $ \left(0,\ 0,\ 0\right)$. In case (c), $C_1(f_1)\leq C_2(f_1)$ for any possible $f_1\in [f_1^{\rm min},f_1^{\rm max}]$, thus all of the three classes of vehicles would use lane 1, and the lane choice equilibrium is unique at $\left(d^{\rm HV,LO},\ \frac{d^{\rm HV,HO}}{n},\ d^{\rm AV,LO}\right)$. 

In case (b), the resulting lane choice equilibria are in general not unique. One can check that possible lane choice equilibria that satisfy Definition~\ref{def:wdp_basic} must satisfy
\begin{align}
    C_1(f_1^*)= C_2(f_1^*),\label{eq:f1*}
\end{align}
where $f_1^*\in(f_1^{\rm min},f_1^{\rm max})$ is the value of $f_1$ at the equilibria. Thus according to Equation~\eqref{eq:f1}, the resulting equilibria must satisfy
\begin{align}
  f_1^{\rm HV,LO}+f_1^{\rm HV,HO}+\mu f_1^{\rm AV,LO} &=   f_1^*-f_1^{\rm min}.\label{eq:range}
\end{align}
All resulting equilibria should also satisfy the feasibility conditions, thus the resulting equilibria lie in a simplex $\mathcal{S}\subset \mathbb{R}^3$ which can be characterized as
\begin{align}
    \mathcal{S}:=\{\mathbf{f}\in \mathbb{R}^3:\mathbf{f}\ \text{satisfies}~\eqref{eq:range}, ~\eqref{eq:1}, ~\eqref{eq:2}, ~\eqref{eq:3}, ~\eqref{eq:4}\}.\label{eq:S}
\end{align}
\end{proof}

Through the proof, for a segment of highway $G=\left(\mathbf{D},\mathbf{d},\tau,n,\mu\right)$, the resulting lane choice equilibrium is only unique if the toll is so big or so small that at the equilibrium, all of the three classes of vehicles choose the same lane. Otherwise, we can easily characterize the non-unique lane choice equilibria by a simplex $\mathcal{S}$ in Equation~\eqref{eq:S}. The total commuter delay at the equilibrium can be ambiguous if there are multiple equilibria. Therefore, we give the following theorem to further characterize the non-unique equilibria in terms of the total commuter delay, which is helpful for future analysis. For a segment of highway $G=\left(\mathbf{D},\mathbf{d},\tau,n,\mu\right)$ with non-unique equilibria, we let $\mathbf{f}^+\in \mathcal{S}$ be the worst equilibrium that satisfies
\begin{align}
    J\left(  \mathbf{f}\right)\leq J\left(  \mathbf{f}^+\right),\ \forall \mathbf{f}\in \mathcal{S},
\end{align}
and $\mathbf{f}^-\in \mathcal{S}$ be the best equilibrium that satisfies
\begin{align}
    J\left(  \mathbf{f}\right)\geq J\left(  \mathbf{f}^-\right),\ \forall \mathbf{f}\in \mathcal{S}.
\end{align}

\begin{theorem}\label{thm:characterization}
For a segment of highway $G=\left(\mathbf{D},\mathbf{d},\tau,n,\mu\right)$ with non-unique lane choice equilibria as described in Definition~\ref{def:wdp_basic}:
\begin{itemize}
    \item if $n>\frac{1}{\mu}$, we have
    \begin{subequations}
    \begin{align}
    \mathbf{f}^+=\left(\underset{\mathbf{f}\in \mathcal{S}}{\text{\rm max}}\ f_1^{\rm HV,LO},\ \underset{\mathbf{f}\in \mathcal{S}}{\text{\rm min}}\ f_1^{\rm HV,HO},\ *\right),\\
        \mathbf{f}^-=\left(\underset{\mathbf{f}\in \mathcal{S}}{\text{\rm min}}\ f_1^{\rm HV,LO},\ \underset{\mathbf{f}\in \mathcal{S}}{\text{\rm max}}\ f_1^{\rm HV,HO},\ *\right),
    \end{align}
    \end{subequations}
    \item if $n<\frac{1}{\mu}$, we have
    \begin{subequations}
    \begin{align}
        \mathbf{f}^+=\left( \underset{\mathbf{f}\in \mathcal{S}}{\text{\rm max}}\ f_1^{\rm HV,LO},\ *,\ \underset{\mathbf{f}\in \mathcal{S}}{\text{\rm min}}\ f_1^{\rm AV,LO}\right),\\
        \mathbf{f}^-=\left( \underset{\mathbf{f}\in \mathcal{S}}{\text{\rm min}}\ f_1^{\rm HV,LO},\ *,\ \underset{\mathbf{f}\in \mathcal{S}}{\text{\rm max}}\ f_1^{\rm AV,LO}\right),
    \end{align}
    \end{subequations}
    \item if $n=\frac{1}{\mu}$, we have
    \begin{subequations}
    \begin{align}
    \mathbf{f}^+=\left(\underset{\mathbf{f}\in \mathcal{S}}{\text{\rm max}}\ f_1^{\rm HV,LO},\ *,\ *\right),\\
        \mathbf{f}^-=\left(\underset{\mathbf{f}\in \mathcal{S}}{\text{\rm min}}\ f_1^{\rm HV,LO},\ *,\ *\right),
    \end{align}
    \end{subequations}
    where $*$ indicates the quantity can be any value that fulfills that $\mathbf{f}^+\in \mathcal{S}$ and $\mathbf{f}^-\in \mathcal{S}$. 
\end{itemize}
\end{theorem}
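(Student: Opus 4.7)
The plan is to recognize that on the equilibrium simplex $\mathcal{S}$, the effective flows are frozen ($f_1 = f_1^{*}$ and $f_2 = f_1^{\rm max} - f_1^{*}$), so both delays $D_1(f_1^{*})$ and $D_2(f_2^{*})$ behave as constants. Consequently $J$ reduces to a linear functional on $\mathcal{S}$, and the theorem becomes a small linear program. First I would eliminate the lane-2 flows via \eqref{eq:1}--\eqref{eq:3} and rewrite
\begin{align*}
J(\mathbf{f}) &= (D_1 - D_2)\bigl[f_1^{\rm HV,LO} + n f_1^{\rm HV,HO} + f_1^{\rm AV,LO}\bigr] + K,
\end{align*}
where $K$ collects terms depending only on the fixed demands and the (now constant) delays. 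Since the equilibrium condition \eqref{eq:f1*} forces $D_1 - D_2 = -\tau \leq 0$, maximizing $J$ over $\mathcal{S}$ is equivalent to minimizing the linear form $L(\mathbf{f}) := f_1^{\rm HV,LO} + n f_1^{\rm HV,HO} + f_1^{\rm AV,LO}$ over $\mathcal{S}$, and minimizing $J$ is equivalent to maximizing $L$.

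Next I would exploit the single non-trivial equality constraint \eqref{eq:range}, namely $f_1^{\rm HV,LO} + f_1^{\rm HV,HO} + \mu f_1^{\rm AV,LO} = f_1^{*} - f_1^{\rm min}$, together with the box constraints \eqref{eq:4}. The natural figure of merit for each variable is the ratio of its coefficient in $L$ to its coefficient in the equality constraint:
\begin{align*}
\rho_{\rm HV,LO} = 1, \qquad \rho_{\rm HV,HO} = n, \qquad \rho_{\rm AV,LO} = 1/\mu.
\end{align*}
A short exchange argument shows that at an optimum of $L$, mass on the equality-constraint budget is concentrated on the variable(s) with the extremal $\rho$: to minimize $L$, push mass toward smaller $\rho$; to maximize, push mass toward larger $\rho$. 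Since $\rho_{\rm HV,LO} = 1$ is always the smallest of the three (because $n \geq 2 > 1$ and $1/\mu > 1$), the variable $f_1^{\rm HV,LO}$ is pushed to its maximum for $\mathbf{f}^{+}$ (minimizer of $L$) and to its minimum for $\mathbf{f}^{-}$ (maximizer of $L$) in every case.

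The case analysis then just compares $\rho_{\rm HV,HO} = n$ with $\rho_{\rm AV,LO} = 1/\mu$. When $n > 1/\mu$, $f_1^{\rm HV,HO}$ is the ``heaviest'' variable, so to maximize $L$ we push it up and to minimize $L$ we push it down, giving the first pair of formulas; the residual value of $f_1^{\rm AV,LO}$ is then forced by the budget \eqref{eq:range} and is correctly displayed as $*$. When $n < 1/\mu$ the roles of $f_1^{\rm HV,HO}$ and $f_1^{\rm AV,LO}$ swap, producing the second pair. When $n = 1/\mu$ the two variables contribute identically to $L$, so only $f_1^{\rm HV,LO}$ is pinned and both of the other coordinates are free, matching the third pair.

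The one nontrivial point to make rigorous is the exchange argument on $\mathcal{S}$: one must verify that pushing the designated coordinate to its extreme is actually feasible, i.e.\ the simplex $\mathcal{S}$ is nonempty at that extremum. This follows because the extremum of a single coordinate on a polytope defined by one equality and box constraints is always attained, and the residual budget can be distributed among the remaining two nonnegative variables (their box constraints leave enough room precisely because $\mathcal{S}$ was already assumed nonempty in the non-unique regime). I expect this feasibility bookkeeping, rather than the algebra, to be the main (but modest) obstacle.
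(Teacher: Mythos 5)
Your proposal is correct and rests on the same core facts as the paper's own proof: on $\mathcal{S}$ the delays are frozen with $D_1(f_1^*)-D_2(f_1^{\rm max}-f_1^*)=-\tau\leq 0$, and the comparison of the per-unit-budget coefficients $1$, $n$, $1/\mu$ (equivalently the signs of $1-\frac{1}{\mu}$ and $n-\frac{1}{\mu}$ in the paper's factored difference $J(\mathbf{f})-J(\mathbf{f}^+)$) decides which coordinates are pushed to their extremes. Your continuous-knapsack greedy/exchange framing is essentially a repackaging of the paper's direct pairwise computation, with the added small benefit that the greedy construction simultaneously furnishes the feasibility of attaining both coordinate extremes at once, which the paper verifies in a separate step.
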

\vspace{-50pt}
\begin{proof}
In this proof, we first give a detailed explanation for $\mathbf{f}^+$ when $n>\frac{1}{\mu}$.
When $n>\frac{1}{\mu}$, let $\mathbf{f}=\left(  f_1^{\rm HV,LO},\ f_1^{\rm HV,HO},\ f_1^{\rm AV,LO} \right)\in \mathcal{S}$ and $\mathbf{f}\neq \mathbf{f}^+$.
Due to $\mathbf{f}\in \mathcal{S}$ and $\mathbf{f}^+\in \mathcal{S}$, we have that $\mathbf{f}$ and $\mathbf{f}^+$ both satisfy Equation~\eqref{eq:range}. Thus we have
\begin{align}
    J&\left(  \mathbf{f}\right)-J\left(  \mathbf{f}^+\right)\nonumber\\
    =&\left[f_1^{\rm HV,LO}-\underset{\mathbf{f}\in \mathcal{S}}{\text{\rm max}}\ f_1^{\rm HV,LO}-\right.\nonumber\\
    &\frac{1}{\mu}\left(f_1^{\rm HV,LO}-\underset{\mathbf{f}\in \mathcal{S}}{\text{\rm max}}\ f_1^{\rm HV,LO}+f_1^{\rm HV,HO}-\underset{\mathbf{f}\in \mathcal{S}}{\text{\rm min}}\ f_1^{\rm HV,HO} \right)\nonumber\\
    &+\left.n\left(f_1^{\rm HV,HO}-\underset{\mathbf{f}\in \mathcal{S}}{\text{\rm min}}\ f_1^{\rm HV,HO}\right)\right]D_1\left(f_1^*\right)\nonumber\\
    &-\left[f_1^{\rm HV,LO}-\underset{\mathbf{f}\in \mathcal{S}}{\text{\rm max}}\ f_1^{\rm HV,LO}\right.-\nonumber\\
    &\frac{1}{\mu}\left(f_1^{\rm HV,LO}-\underset{\mathbf{f}\in \mathcal{S}}{\text{\rm max}}\ f_1^{\rm HV,LO}+f_1^{\rm HV,HO}-\underset{\mathbf{f}\in \mathcal{S}}{\text{\rm min}}\ f_1^{\rm HV,HO} \right)\nonumber\\
    &+\left.n\left(f_1^{\rm HV,HO}-\underset{\mathbf{f}\in \mathcal{S}}{\text{\rm min}}\ f_1^{\rm HV,HO}\right)\right]D_2\left(f_1^{\rm max}-f_1^*\right)
\end{align}
\begin{align}
    =&\left[f_1^{\rm HV,LO}-\underset{\mathbf{f}\in \mathcal{S}}{\text{\rm max}}\ f_1^{\rm HV,LO}\right.-\nonumber\\
    &\frac{1}{\mu}\left(f_1^{\rm HV,LO}-\underset{\mathbf{f}\in \mathcal{S}}{\text{\rm max}}\ f_1^{\rm HV,LO}+f_1^{\rm HV,HO}-\underset{\mathbf{f}\in \mathcal{S}}{\text{\rm min}}\ f_1^{\rm HV,HO} \right)\nonumber\\
    &+\left.n\left(f_1^{\rm HV,HO}-\underset{\mathbf{f}\in \mathcal{S}}{\text{\rm min}}\ f_1^{\rm HV,HO}\right)\right]\times\nonumber\\
    &\ \ \ \ \ \ \ \ \ \ \ \ \ \ \ \ \ \ \ \ \ \ \ \ \ \left(D_1\left(f_1^*\right)-D_2\left(f_1^{\rm max}-f_1^*\right)\right)\\
    =&\left[\left(1-\frac{1}{\mu}\right)\left(f_1^{\rm HV,LO}-\underset{\mathbf{f}\in \mathcal{S}}{\text{\rm max}}\ f_1^{\rm HV,LO}\right)\right.\nonumber\\
    &+\left.\left(n-\frac{1}{\mu}\right)\left(f_1^{\rm HV,HO}-\underset{\mathbf{f}\in \mathcal{S}}{\text{\rm min}}\ f_1^{\rm HV,HO} \right)\right]\times\nonumber\\
    &\ \ \ \ \ \ \ \ \ \ \ \ \ \ \ \ \ \ \ \ \ \ \ \ \  \left(D_1\left(f_1^*\right)-D_2\left(f_1^{\rm max}-f_1^*\right)\right).
\end{align}
According to Equation~\eqref{eq:f1*}, we have 
\begin{align}
    D_1\left(f_1^*\right)+\tau=D_2\left(f_1^{\rm max}-f_1^*\right).
\end{align}
Since $\tau\geq 0$, we have 
\begin{align}
    D_1\left(f_1^*\right)-D_2\left(f_1^{\rm max}-f_1^*\right)\leq 0.
\end{align}
Also, we have $1-\frac{1}{\mu}< 0$, $n-\frac{1}{\mu}> 0$, $f_1^{\rm HV,LO}-\underset{\mathbf{f}\in \mathcal{S}}{\text{\rm max}}\ f_1^{\rm HV,LO}\leq 0$ and $f_1^{\rm HV,HO}-\underset{\mathbf{f}\in \mathcal{S}}{\text{\rm min}}\ f_1^{\rm HV,HO}\geq 0$, thus we have 
\begin{align}
    J\left(  \mathbf{f}\right)-J\left(  \mathbf{f}^+\right)\leq 0.
\end{align}
To complete the proof, we have to also show that $\mathbf{f}^+\in \mathcal{S}$ exists. To show this, we prove that there exists $*$ that both satisfies condition~\eqref{eq:range} and~\eqref{eq:3}, i.e., there exists $*$ that satisfies
\begin{align}
    \underset{\mathbf{f}\in \mathcal{S}}{\text{\rm max}}\ f_1^{\rm HV,LO}+\underset{\mathbf{f}\in \mathcal{S}}{\text{\rm min}}\ f_1^{\rm HV,HO}+*\mu &=f_1^*-f_1^{\rm min},\\
    0\leq *&\leq d^{\rm AV,LO}.
\end{align}
Equivalently, we want to show that \begin{align}
    0\leq f_1^*-f_1^{\rm min}-\underset{\mathbf{f}\in \mathcal{S}}{\text{\rm max}}\ f_1^{\rm HV,LO}-\underset{\mathbf{f}\in \mathcal{S}}{\text{\rm min}}\ f_1^{\rm HV,HO} \leq \mu d^{\rm AV,LO}.
\end{align}
We first prove the first inequality. Let $\Tilde{f}_1^{\rm HV,HO}$ be the flow of human-driven vehicles with high occupancy on lane 1 at any $\mathbf{f}\in\mathcal{S}$ where the flow of human-driven vehicles with low occupancy on lane 1 equals $\underset{\mathbf{f}\in \mathcal{S}}{\text{\rm max}}\ f_1^{\rm HV,LO}$. According to Equation~\eqref{eq:range}, we must have
\begin{align}
    \Tilde{f}_1^{\rm HV,HO}\leq f_1^*-f_1^{\rm min}-\underset{\mathbf{f}\in \mathcal{S}}{\text{\rm max}}\ f_1^{\rm HV,LO}.
\end{align}
Due to $\underset{\mathbf{f}\in \mathcal{S}}{\text{\rm min}}\ f_1^{\rm HV,HO}\leq \Tilde{f}_1^{\rm HV,HO}$, we have
\begin{align}
    \underset{\mathbf{f}\in \mathcal{S}}{\text{\rm min}}\ f_1^{\rm HV,HO}\leq f_1^*-f_1^{\rm min}-\underset{\mathbf{f}\in \mathcal{S}}{\text{\rm max}}\ f_1^{\rm HV,LO},
\end{align}
which shows the first inequality. Then let $\Tilde{f}_1^{\rm HV,LO}$ be the flow of human-driven vehicles with low occupancy on lane 1, $\Tilde{f}_1^{\rm AV,LO}$ be the flow of autonomous vehicles with low occupancy on lane 1 at any $\mathbf{f}\in\mathcal{S}$ where the flow of human-driven vehicles with high occupancy on lane 1 equals $\underset{\mathbf{f}\in \mathcal{S}}{\text{\rm min}}\ f_1^{\rm HV,HO}$. According to Equation~\eqref{eq:range}, we must have
\begin{align}
    f_1^*-f_1^{\rm min}-\underset{\mathbf{f}\in \mathcal{S}}{\text{\rm min}}\ f_1^{\rm HV,HO}&= \Tilde{f}_1^{\rm HV,LO}+\mu \Tilde{f}_1^{\rm AV,LO}.
\end{align}
Due to $\Tilde{f}_1^{\rm HV,LO}+\mu \Tilde{f}_1^{\rm AV,LO}\leq  \underset{\mathbf{f}\in \mathcal{S}}{\text{\rm max}}\ f_1^{\rm HV,LO}+\mu d^{\rm AV,LO}$, we have
\begin{align}
    f_1^*-f_1^{\rm min}-\underset{\mathbf{f}\in \mathcal{S}}{\text{\rm min}}\ f_1^{\rm HV,HO}\leq  \underset{\mathbf{f}\in \mathcal{S}}{\text{\rm max}}\ f_1^{\rm HV,LO}+\mu d^{\rm AV,LO},
\end{align}
which shows the second inequality.
With a similar process, we can prove for the other claims. Thus details are omitted.
\end{proof}

Theorem~\ref{thm:characterization} implicitly compares the capability of autonomous vehicles with low-occupancy and human-driven vehicles with high-occupancy to decrease the total commuter delay. When $n>\frac{1}{\mu}$, human-driven vehicles with high-occupancy are more capable than autonomous vehicles with low-occupancy, and therefore, among the multiple equilibria, the best case equilibrium happens when we prioritize high-occupancy vehicles instead of autonomous vehicles on the toll lane 1. When $n<\frac{1}{\mu}$, human-driven vehicles with high-occupancy are less capable than autonomous vehicles with low-occupancy, and therefore, among the multiple equilibria, the best case equilibrium happens when we prioritize autonomous vehicles instead of high-occupancy vehicles on the toll lane 1. In all cases, the least capable class of vehicles is human-driven vehicles with low-occupancy, thus the worst equilibrium happens when we prioritize human-driven vehicles with low-occupancy on toll lane 1.

With Theorem~\ref{thm:uniqueness}, we can obtain the unique equilibrium when conditions hold and when equilibria are not unique, with Theorem~\ref{thm:characterization}, we can easily obtain the best/worst case equilibrium in terms of the total commuter delay.

\begin{example}
To better clarify, we give a numerical example. Let $\mathbf{d}=\{d^{\rm AV,HO}=4,\ d^{\rm AV,LO}=3,\ d^{\rm HV,HO}=4,\ d^{\rm HV,LO}=5\}$. Assume the delay functions as BPR functions~\cite{roads1964traffic} in the form:
\begin{align}
    D_i(f_i)=\theta_i+\gamma_i\left(\frac{f_i}{m_i}\right)^{\beta_i},\ \forall i \in I,
\end{align}
with parameters $\mathbf{D}=\{\theta_i=3,\ \gamma_i=1,\ \beta_i=1,\ m_i=10:\ i\in I\}$. When $\{n=4,\ \mu =0.5\}$, the lane choice equilibrium always exists and is unique when $\tau\geq 0.7$. Setting $\tau=0.5$, the resulting equilibria form a simplex. The best-case equilibrium in terms of the total commuter delay lies at $\left(f_1^{\rm HV,LO}=0,\ f_1^{\rm HV,HO}=1,\ f_1^{\rm AV,LO}=0\right)$, and the worst equilibrium at $\left(f_1^{\rm HV,LO}=1,\ f_1^{\rm HV,HO}=0,\ f_1^{\rm AV,LO}=0\right)$. When $\{n=2,\ \mu =0.4\}$, the equilibrium is unique when $\tau\geq 0.74$. Setting $\tau=0.5$,
we have the best equilibrium at 
$\left(f_1^{\rm HV,LO}=0,\ f_1^{\rm HV,HO}=0,\ f_1^{\rm AV,LO}=3\right)$, and the worst equilibrium at $\left(f_1^{\rm HV,LO}=1.2,\ f_1^{\rm HV,HO}=0,\ f_1^{\rm AV,LO}=0\right)$.
\end{example}

\section{Design the toll} \label{sec:toll design}

One intriguing problem for the policy designers is to design the toll properly and therefore, induce the resulting choice equilibrium to a socially optimal one. The optimization problem can be formulated as 
\begin{align}
    \underset{\tau\geq 0}{\text{min}} &\ \ \ J(\mathbf{f})\nonumber\\
    \text{subject to} &\ \ \ \text{Conditions} ~\eqref{eq:1}-~\eqref{eq:eq_def}.\nonumber
\end{align}
Usually, optimization problems with equilibrium conditions are difficult to deal with. However, with the characterization of the equilibria in section~\ref{sec:NE}, we propose a simple but effective algorithm to find the optimal toll that minimizes the total commuter delay.

At each value of toll, according to Theorem~\ref{thm:uniqueness}, the equilibrium is either on the pure strategy points or in the simplex $\mathcal{S}$.
The simplex is easily obtained by solving a single variable equation~\eqref{eq:f1*}. Moreover, according to Theorem~\ref{thm:characterization}, the best/worst case equilibrium in terms of the total delay can be easily selected from the contour of $\mathcal{S}$. Therefore, at each value of toll, the total delay or the best/worst total delay are easily calculated. Naturally, the toll optimization problem becomes a one dimensional search problem, which can be readily solved by well established algorithms such as golden section search. Also, the algorithm has no requirements for the convexity of delay configurations.

\begin{figure}
\centering
 \includegraphics[width=0.45\textwidth]{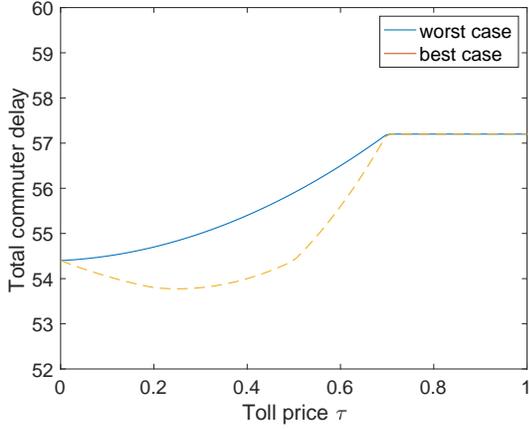}
    \caption{The best/worst-case total commuter delay versus different toll values in Example~\ref{eg:toll}.}\label{fig:toll}
\end{figure}

\begin{example}\label{eg:toll}
To better clarify, we use a numerical example to validate our method. Let $\{d^{\rm AV,HO}=4,\ d^{\rm AV,LO}=3,\ d^{\rm HV,HO}=4,\ d^{\rm HV,LO}=5,\ n=4,\ \mu =0.5\}$. Assume the delay functions as BPR functions with parameters $\mathbf{D}=\{\theta_i=3,\ \gamma_i=1,\ \beta_i=1,\ m_i=10:\ i\in I\}$. The plot of best/worst case total delay at different values of toll is shown in Figure~\ref{fig:toll}. As we can see, in this case, with toll increasing, the worst case total delay increases, whereas the best case total delay first decreases and then increases. We may choose the toll to be 0 to minimize the worst case total delay or we may choose the toll to be around 0.25 to minimize the best case total delay.
\end{example}

\section{Design the occupancy threshold} \label{sec:n design}


\begin{figure}
\centering
 \includegraphics[width=0.45\textwidth]{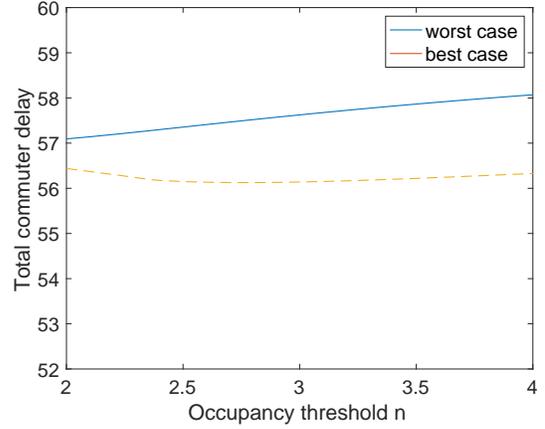}
    \caption{The best/worst case total commuter delay versus different values of occupancy threshold $n$ in Example~\ref{eg:n}.}\label{fig:thresh_p}
\end{figure}

Another interesting problem is to find a proper occupancy threshold $n$. With the fixed total commuter demands, policy designers may want to set the value of $n$ higher to encourage a higher occupancy of vehicles, however, the carpooling difficulty at a higher $n$ also increases, which may lead to a decreased demand of commuters who are willing to carpool. To address such trade-off, in this section, we assume the demand of commuters who take human-driven or autonomous vehicles is fixed, and is denoted by $d^{\rm HV}$ and $d^{\rm AV}$ respectively. Also, for an occupancy threshold $n\geq 2$, the probability of a commuter to carpool is $p(n)\in [0,1]$. The function $p(\cdot)$ is a non-increasing function. Therefore, we have 
\begin{align}
    &d^{\rm HV,LO} =d^{\rm HV}(1-p(n)),\\
    &d^{\rm HV,HO} =d^{\rm HV}p(n),\\
    &d^{\rm AV,LO} =d^{\rm AV}(1-p(n)).
\end{align}
To find the optimal $n$, we are solving the optimization problem:
\begin{align}
    \underset{n\geq 2}{\text{min}} &\ \ \ J(\mathbf{f})\nonumber\\
    \text{subject to} &\ \ \ \text{Conditions} ~\eqref{eq:1}-~\eqref{eq:eq_def}.\nonumber
\end{align}
Similar to the toll design problem, we propose an effective solution algorithm with no requirement for the convexity of the delay configurations. At each value of $n$, we either obtain a pure strategy equilibrium or a simplex $\mathcal{S}$ according to Theorem~\ref{thm:uniqueness}. The best/worst case equilibrium in terms of the total delay can then be easily selected by Theorem~\ref{thm:characterization}.  Therefore, at each value of $n$, the total delay or the best/worst total delay is obtained. Naturally, the $n$-optimization problem becomes a one dimensional search problem, which can be solved by algorithms such as golden section search. Notice that the algorithm works fine with any candidate range of $n$ even when the range is discrete.

\begin{example}\label{eg:n}
We employ the following numerical example to support the proposed algorithm. Let $\{d^{\rm AV}=7,\ d^{\rm HV}=9,\ \mu =0.5,\ \tau =0.5\}$. Assume the delay functions as BPR functions with parameters $\mathbf{D}=\{\theta_i=3,\ \gamma_i=1,\ \beta_i=1,\ m_i=10:\ i\in I\}$. We assume $p(n)=\frac{1}{n}$ for any $n\in[2,4]$. The corresponding total delay of each value of $n$ is shown in Figure~\ref{fig:thresh_p}. As we can see from the result, increasing $n$ does not necessarily decrease the total delay.
\end{example}

\section{Design the policy} \label{sec:policy design}

Currently on the roads, we have high-occupancy vehicle lanes which are reserved freely for high-occupancy vehicles and let other vehicles enter with a toll. With the development of the autonomous driving technologies, the concept of dedicated lanes, which are reserved freely for autonomous vehicles and other vehicles may enter with a toll, has also been proposed and experimented. However, in a scenario where human-driven/autonomous vehicles with low/high-occupancy are sharing the roads, with limited budget, is it better to employ a high-occupancy vehicle lane or a dedicated lane for autonomous vehicles? In this section, we first elaborate how the policies of high-occupancy vehicle lanes and dedicated lanes for autonomous vehicles fit into the toll lane framework in section~\ref{sec:model} and further investigate the proper choice of the policy.

The high-occupancy vehicle lane policy admits all high-occupancy vehicles to travel freely. Thus, we can see the high-occupancy vehicle lane policy as a special case when we always have $f_1^{\rm HV,HO}=\frac{d^{\rm HV,HO}}{n}$. And the two classes of vehicles making lane choices are human-driven/autonomous low-occupancy vehicles. The properties of the resulting equilibria then can be investigated by Theorem~\ref{thm:uniqueness} and~\ref{thm:characterization}. 
The dedicated lane policy admits all autonomous vehicles freely. Thus, we can see the dedicated lane policy as a special case when $f_1^{\rm AV,LO}=d^{\rm AV,LO}$. And the two classes of vehicles making lane choices are human-driven low/high-occupancy vehicles. Theorem~\ref{thm:uniqueness} and~\ref{thm:characterization} can then be applied to characterize the properties of the resulting equilibria.

For a specific segment of highway with a uniform toll, under either of the policies, according to Theorem ~\ref{thm:uniqueness}, the resulting equilibrium can be obtained if it is unique or otherwise, the best/worst case equilibrium in terms of the total delay can be obtained by Theorem~\ref{thm:characterization}. Therefore, the evaluation of the strategies becomes a rather simple problem.

\begin{figure}
\centering
 \includegraphics[width=0.45\textwidth]{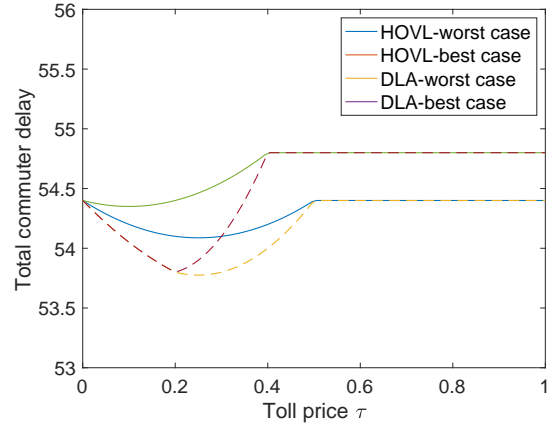}
    \caption{The best/worst case total delay versus different toll values under the dedicated lane policy for autonomous vehicles (DLA) or the high-occupancy vehicle lane policy (HOVL) in Example~\ref{eg:policy}.}\label{fig:policy}
\end{figure}

\begin{example}\label{eg:policy}
To better clarify, we use the following numerical example. Let $\{d^{\rm AV,HO}=4,\ d^{\rm AV,LO}=3,\ d^{\rm HV,HO}=4,\ d^{\rm HV,LO}=5,\ n=4,\ \mu =0.5\}$. Assume the delay functions as BPR functions with parameters $\mathbf{D}=\{\theta_i=3,\ \gamma_i=1,\ \beta_i=1,\ m_i=10:\ i\in I\}$. The comparison of the two policies can be seen in Figure~\ref{fig:policy}. As we can see, for this segment of highway, with any toll value, the high-occupancy vehicle lane policy outperforms the dedicated lane policy for autonomous vehicles.
\end{example}
\section{Differentiate the tolls} \label{sec:diff_toll}

Further, we consider the scenario where the three tolled classes of vehicles are assigned with heterogeneous tolls. Instead of using a uniform toll, we define a vector toll $\bm{\tau}:= \left(\tau^{\rm HV,LO},\ \tau^{\rm HV,HO},\ \tau^{\rm AV,LO}\right)$ containing the tolls for the three classes of vehicles. Correspondingly, we let $J(\bm{\tau})$ be the total commuter delay under the heterogeneous vector toll $\bm{\tau}$. The toll optimization problem as described in Section~\ref{sec:toll design} when tolls are heterogeneous is a nontrivial bi-level optimization problem, which may potentially be solved by iterative optimization algorithms. 
However, iterative algorithms may take time to converge and for non-convex delay configurations, the convergence is not guaranteed. Therefore, we propose another efficient method to decrease the total delay: first, assuming all tolls are uniform, find the optimal toll which has the smallest best case total delay according to the method described in Section~\ref{sec:toll design}; second, if the optimal toll is non-zero and there are multiple equilibria under the optimal uniform toll, differentiate the tolls and induce the best case equilibrium under the optimal uniform toll. This way, we can effectively decrease the total delay without any requirement for the algorithm convergence. Notice that we exclude the case when the optimal toll is zero, since the multiple equilibria under zero toll share the same total delay. Specifically, we can differentiate the tolls according to the following proposition.

\begin{proposition}
For a segment of highway $G=\left(\mathbf{D},\mathbf{d},\tau^*,n,\mu\right)$, where $\tau^*>0$ is a predetermined optimal uniform toll which induce non-unique equilibria, let $J^*(\tau^*)$ be the best total commuter delay under the uniform toll $\tau^*$. Let $\tau^-\geq 0$ be any value of toll satisfying $\tau^-< \tau^*$ and $\tau^+> 0$ be any value of toll satisfying $\tau^+>\tau^*$.
\begin{itemize}
    \item If $n\leq \frac{1}{\mu}$,    \begin{itemize}
        \item if $f_1^*\leq \mu\left(d^{\rm AV,LO}+\frac{d^{\rm AV,HO}}{n}\right)$, then set $\bm{\tau}=\left(\tau^+,\ \tau^+,\ \tau^*\right)$ and we have $J\left(\bm{\tau}\right)=J^*\left(\tau^*\right)$.
        \item if $\mu\left(d^{\rm AV,LO}+\frac{d^{\rm AV,HO}}{n}\right)< f_1^*\leq \frac{d^{\rm HV,HO}}{n}+\mu\left(d^{\rm AV,LO}+\frac{d^{\rm AV,HO}}{n}\right)  $, then set                     $\bm{\tau}=\left(\tau^+,\tau^*,\tau^-\right)$ and we have $J\left(\bm{\tau}\right)=J^*\left(\tau^*\right)$.
        \item if $\frac{d^{\rm HV,HO}}{n}+\mu\left(d^{\rm AV,LO}+\frac{d^{\rm AV,HO}}{n}\right) < f_1^*$, then set $\bm{\tau}=\left(\tau^*,\tau^-,\tau^-\right)$ and we have $J\left(\bm{\tau}\right)=J^*\left(\tau^*\right)$.

    \end{itemize}
    \item If $n> \frac{1}{\mu}$,
    \begin{itemize}
        \item if $f_1^*\leq \frac{d^{\rm HV,HO}}{n}+\mu\frac{d^{\rm AV,HO}}{n} $, then set $\bm{\tau}=\left(\tau^+,\tau^*,\tau^+\right)$ and we have $J\left(\bm{\tau}\right)=J^*\left(\tau^*\right)$.
        \item b) if $\frac{d^{\rm HV,HO}}{n}+\mu\frac{d^{\rm AV,HO}}{n}< f_1^*\leq \frac{d^{\rm HV,HO}}{n}+\mu\left(d^{\rm AV,LO}+\frac{d^{\rm AV,HO}}{n}\right) $, then set $\bm{\tau}=\left(\tau^+,\tau^-,\tau^*\right)$ and we have $J\left(\bm{\tau}\right)=J^*\left(\tau^*\right)$.
        \item c) if $\frac{d^{\rm HV,HO}}{n}+\mu\left(d^{\rm AV,LO}+\frac{d^{\rm AV,HO}}{n}\right) < f_1^*$, then set $\bm{\tau}=\left(\tau^*,\tau^-,\tau^-\right)$ and we have $J\left(\bm{\tau}\right)=J^*\left(\tau^*\right)$.

    \end{itemize}
\end{itemize}
\end{proposition}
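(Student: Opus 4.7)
The plan is to leverage Theorem~\ref{thm:uniqueness} to pin down the common lane-1 effective flow $f_1^*$ shared by every equilibrium under the uniform toll $\tau^*$, then use Theorem~\ref{thm:characterization} to identify the specific best-case equilibrium $\mathbf{f}^-\in\mathcal{S}$, and finally verify that each prescribed heterogeneous toll vector $\bm{\tau}$ induces exactly $\mathbf{f}^-$ as a Wardrop equilibrium in the heterogeneous-toll sense. The payoff is immediate: if the induced flow matches $\mathbf{f}^-$, then Equation~\eqref{eq:social_cost} yields $J(\bm{\tau})=J(\mathbf{f}^-)=J^*(\tau^*)$.

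First, I would note that since $\tau^*$ produces non-unique equilibria, case (b) of Figure~\ref{fig:JJ} applies and the scalar $f_1^*\in(f_1^{\rm min},f_1^{\rm max})$ is the unique solution of $D_1(f_1^*)+\tau^*=D_2(f_1^{\rm max}-f_1^*)$. Every equilibrium in the simplex $\mathcal{S}$ shares this same effective flow on lane 1 and hence the same lane delays. I would then extend Definition~\ref{def:wdp_basic} to the heterogeneous setting by letting class $k$'s lane-1 cost be $D_1(f_1)+\tau^k$: a class with $\tau^k>D_2(f_2)-D_1(f_1)$ must place zero flow on lane 1, a class with $\tau^k<D_2(f_2)-D_1(f_1)$ must place its full demand on lane 1, and a class with $\tau^k=D_2(f_2)-D_1(f_1)$ may split arbitrarily subject to feasibility.

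Next, for each of the six subcases, I would propose as a candidate equilibrium the flow vector in which classes assigned $\tau^+$ place zero flow on lane 1, classes assigned $\tau^-$ place their entire demand on lane 1, and the class assigned $\tau^*$ absorbs exactly the residual required by Equation~\eqref{eq:f1} to achieve $f_1=f_1^*$. Because $D_1(f_1^*)+\tau^*=D_2(f_1^{\rm max}-f_1^*)$ is preserved, the indifference condition holds for the $\tau^*$ class, while the strict inequalities $\tau^+>\tau^*$ and $\tau^-<\tau^*$ immediately verify the strict preferences for lanes 2 and 1 respectively for the other classes. The interval delimiting each subcase (expressed in terms of $f_1^*$, $\mu$, $n$, and the class demands) is precisely what is needed to ensure that the residual flow of the $\tau^*$ class is nonnegative and does not exceed its demand.

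Finally, I would check that the candidate flow vector matches the best-case equilibrium $\mathbf{f}^-$ from Theorem~\ref{thm:characterization}. The prescribed assignment mirrors the priority ordering identified there: AV,LO is prioritized over HV,HO on lane 1 when $n\leq 1/\mu$ and the reverse when $n>1/\mu$, while HV,LO is always the least prioritized. Consequently, the $\tau^-$ assignments realize the $\max$ quantities and the $\tau^+$ assignments realize the $\min$ quantities dictated by the theorem, with the $\tau^*$ class taking the role of the slack component marked $*$. I expect the main obstacle to be the case-by-case bookkeeping: verifying that the six intervals on $f_1^*$ partition the admissible range $(f_1^{\rm min},f_1^{\rm max})$ and that the residual flow of the indifferent class lands within its correct feasibility bounds in each case, which reduces to matching threshold expressions against the upper bounds $d^{\rm HV,LO}$, $d^{\rm HV,HO}/n$, and $\mu d^{\rm AV,LO}$.
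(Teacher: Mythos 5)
Your proposal is correct and follows essentially the same route as the paper: use Theorem~\ref{thm:characterization} to identify the best-case equilibrium under the uniform toll $\tau^*$, then verify that the prescribed heterogeneous toll vector makes exactly that flow a Wardrop equilibrium, with the interval conditions on $f_1^*$ guaranteeing that the indifferent ($\tau^*$-tolled) class's residual flow is feasible. The only element the paper additionally asserts (also without detail) is that this flow is the \emph{unique} equilibrium under $\bm{\tau}$, which is what makes $J(\bm{\tau})$ well defined and deserves a one-line argument in your write-up; otherwise your outline matches the paper's proof.
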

\begin{proof}
We give the detailed explanation for the first sub-case when $n\leq \frac{1}{\mu}$. According to Theorem~\ref{thm:characterization}, when $n\leq \frac{1}{\mu}$, we should first prioritize autonomous vehicles with low occupancy on lane 1. When $f_1^*\leq \mu\left(d^{\rm AV,LO}+\frac{d^{\rm AV,HO}}{n}\right)$, the best equilibrium under the optimal uniform toll would be $(0,0,\frac{f_1^*-f_1^{\rm min}}{\mu})$. One can check that $(0,0,\frac{f_1^*-f_1^{\rm min}}{\mu})$ is an equilibrium and is the only equilibrium that fulfills Definition~\ref{def:wdp_basic} when tolls are selected as $\bm{\tau}$. The proof for other cases follows the same logic, thus omitted here.
\end{proof}
The general idea of this toll design proposition is to first identify the best equilibrium, and then assign the toll of $\tau^*$ to the class of vehicles that use both lanes at the best equilibrium, a toll higher than $\tau^*$ to the class of vehicles less prior and a toll lower than $\tau^*$ to the class of vehicles more prior.

\section{conclusion} \label{sec:future}
We proposed a toll lane framework where four classes of vehicles are sharing a segment of highway: autonomous vehicles with high occupancy travel freely on a reserved lane and the other three classes of vehicles: human-driven vehicles with low occupancy, human-driven vehicles with high occupancy, autonomous vehicles with low occupancy can choose to enter the reserved lane paying a toll or use the other regular lanes freely. We consider all vehicles are selfish and established desirable properties of the resulting lane choice equilibria, which implicitly compare high-occupancy vehicles with autonomous vehicles in terms of their capabilities to increase the social mobility. We further clarified the various potential applications of this toll lane framework that unites high-occupancy vehicles and autonomous vehicles in the optimal toll design, occupancy threshold design and the policy design problems.

\section*{Acknowledgments}
This work was supported by the National Science Foundation under Grants CPS 1545116 and ECCS-2013779.


\bibliographystyle{IEEEtran}
\bibliography{MLU.bib,onramp.bib,CDC2020.bib,protocol.bib,cdc.bib}

\begin{thebibliography}{10}
\providecommand{\url}[1]{#1}
\csname url@samestyle\endcsname
\providecommand{\newblock}{\relax}
\providecommand{\bibinfo}[2]{#2}
\providecommand{\BIBentrySTDinterwordspacing}{\spaceskip=0pt\relax}
\providecommand{\BIBentryALTinterwordstretchfactor}{4}
\providecommand{\BIBentryALTinterwordspacing}{\spaceskip=\fontdimen2\font plus
\BIBentryALTinterwordstretchfactor\fontdimen3\font minus
  \fontdimen4\font\relax}
\providecommand{\BIBforeignlanguage}[2]{{%
\expandafter\ifx\csname l@#1\endcsname\relax
\typeout{** WARNING: IEEEtran.bst: No hyphenation pattern has been}%
\typeout{** loaded for the language `#1'. Using the pattern for}%
\typeout{** the default language instead.}%
\else
\language=\csname l@#1\endcsname
\fi
#2}}
\providecommand{\BIBdecl}{\relax}
\BIBdecl

\bibitem{farmer2008crash}
C.~M. Farmer, ``Crash avoidance potential of five vehicle technologies,''
  \emph{Insurance Institute for Highway Safety}, 2008.

\bibitem{bagloee2016autonomous}
S.~A. Bagloee, M.~Tavana, M.~Asadi, and T.~Oliver, ``Autonomous vehicles:
  challenges, opportunities, and future implications for transportation
  policies,'' \emph{Journal of modern transportation}, vol.~24, no.~4, pp.
  284--303, 2016.

\bibitem{asadi2010predictive}
B.~Asadi and A.~Vahidi, ``Predictive cruise control: Utilizing upcoming traffic
  signal information for improving fuel economy and reducing trip time,''
  \emph{IEEE transactions on control systems technology}, vol.~19, no.~3, pp.
  707--714, 2010.

\bibitem{luo2010model}
L.~Luo, H.~Liu, P.~Li, and H.~Wang, ``Model predictive control for adaptive
  cruise control with multi-objectives: comfort, fuel-economy, safety and
  car-following,'' \emph{Journal of Zhejiang University SCIENCE A}, vol.~11,
  no.~3, pp. 191--201, 2010.

\bibitem{zohdy2012intersection}
I.~H. Zohdy, R.~K. Kamalanathsharma, and H.~Rakha, ``Intersection management
  for autonomous vehicles using icacc,'' in \emph{2012 15th international IEEE
  conference on intelligent transportation systems}.\hskip 1em plus 0.5em minus
  0.4em\relax IEEE, 2012, pp. 1109--1114.

\bibitem{talebpour2016influence}
A.~Talebpour and H.~S. Mahmassani, ``Influence of connected and autonomous
  vehicles on traffic flow stability and throughput,'' \emph{Transportation
  Research Part C: Emerging Technologies}, vol.~71, pp. 143--163, 2016.

\bibitem{lioris2017platoons}
J.~Lioris, R.~Pedarsani, F.~Y. Tascikaraoglu, and P.~Varaiya, ``Platoons of
  connected vehicles can double throughput in urban roads,''
  \emph{Transportation Research Part C: Emerging Technologies}, vol.~77, pp.
  292--305, 2017.

\bibitem{mehr2019will}
N.~Mehr and R.~Horowitz, ``How will the presence of autonomous vehicles affect
  the equilibrium state of traffic networks?'' \emph{IEEE Transactions on
  Control of Network Systems}, vol.~7, no.~1, pp. 96--105, 2019.

\bibitem{mahmassani201650th}
H.~S. Mahmassani, ``50th anniversary invited article—autonomous vehicles and
  connected vehicle systems: Flow and operations considerations,''
  \emph{Transportation Science}, vol.~50, no.~4, pp. 1140--1162, 2016.

\bibitem{ye2018impact}
L.~Ye and T.~Yamamoto, ``Impact of dedicated lanes for connected and autonomous
  vehicle on traffic flow throughput,'' \emph{Physica A: Statistical Mechanics
  and its Applications}, vol. 512, pp. 588--597, 2018.

\bibitem{ivanchev2019macroscopic}
J.~Ivanchev, A.~Knoll, D.~Zehe, S.~Nair, and D.~Eckhoff, ``A macroscopic study
  on dedicated highway lanes for autonomous vehicles,'' in \emph{International
  Conference on Computational Science}.\hskip 1em plus 0.5em minus 0.4em\relax
  Springer, 2019, pp. 520--533.

\bibitem{doi:10.3141/2622-01}
\BIBentryALTinterwordspacing
A.~Talebpour, H.~S. Mahmassani, and A.~Elfar, ``Investigating the effects of
  reserved lanes for autonomous vehicles on congestion and travel time
  reliability,'' \emph{Transportation Research Record}, vol. 2622, no.~1, pp.
  1--12, 2017. [Online]. Available: \url{https://doi.org/10.3141/2622-01}
\BIBentrySTDinterwordspacing

\bibitem{zhong2018assessing}
Z.~Zhong, ``Assessing the effectiveness of managed lane strategies for the
  rapid deployment of cooperative adaptive cruise control technology,'' 2018.

\bibitem{liu2019strategic}
Z.~Liu and Z.~Song, ``Strategic planning of dedicated autonomous vehicle lanes
  and autonomous vehicle/toll lanes in transportation networks,''
  \emph{Transportation Research Part C: Emerging Technologies}, vol. 106, pp.
  381--403, 2019.

\bibitem{xiao2019traffic}
L.~Xiao, M.~Wang, and B.~van Arem, ``Traffic flow impacts of converting an hov
  lane into a dedicated cacc lane on a freeway corridor,'' \emph{IEEE
  Intelligent Transportation Systems Magazine}, vol.~12, no.~1, pp. 60--73,
  2019.

\bibitem{guo2020leveraging}
Y.~Guo and J.~Ma, ``Leveraging existing high-occupancy vehicle lanes for
  mixed-autonomy traffic management with emerging connected automated vehicle
  applications,'' \emph{Transportmetrica A: Transport Science}, vol.~16, no.~3,
  pp. 1375--1399, 2020.

\bibitem{li2020impact}
R.~Li, N.~Mehr, and R.~Horowitz, ``The impact of autonomous vehicles' headway
  on the social delay of traffic networks,'' in \emph{2020 59th IEEE Conference
  on Decision and Control (CDC)}.\hskip 1em plus 0.5em minus 0.4em\relax IEEE,
  2020, pp. 268--273.

\bibitem{roads1964traffic}
U.~B. O.~P. Roads, ``Traffic assignment manual,'' \emph{US Department of
  Commerce, Washington, DC}, 1964.

\bibitem{lazar2017capacity}
D.~A. Lazar, S.~Coogan, and R.~Pedarsani, ``Capacity modeling and routing for
  traffic networks with mixed autonomy,'' in \emph{Decision and Control (CDC),
  2017 IEEE 56th Conference on, to appear, IEEE}, 2017.

\bibitem{wardrop1952some}
J.~G. Wardrop, ``Some theoretical aspects of road traffic research,'' in
  \emph{Proceedings of the Institution of Civil Engineers, Volume 1 Issue 3},
  1952, pp. 325--362.

\bibitem{braess1979existence}
D.~Braess and G.~Koch, ``On the existence of equilibria in asymmetrical
  multiclass-user transportation networks,'' \emph{Transportation Science},
  vol.~13, no.~1, pp. 56--63, 1979.

\end{thebibliography}

\end{document}